\documentclass[conference]{IEEEtran}
\hyphenation{op-tical net-works semi-conduc-tor IEEEtran}
\usepackage{graphicx,cite,epsfig,amssymb,amsmath}
\usepackage{pifont}
\usepackage{stmaryrd}
\usepackage{bbding}
\usepackage{subfigure}
\usepackage{algorithm}
\usepackage{algorithmic}
\usepackage{slashbox,multirow}
\usepackage{mathrsfs}
\usepackage{latexsym}
\usepackage{indentfirst}
\usepackage{fmtcount}
\usepackage{cite}

\newtheorem{theorem}{Theorem}

\newtheorem{lemma}{Lemma}

\IEEEoverridecommandlockouts

\begin{document}

\title{Rateless-Coding-Assisted Multi-Packet Spreading over Mobile Networks}

\author{\authorblockN{Huazi~Zhang$^{\dag\ddag}$, Zhaoyang~Zhang$^{\dag}$, Huaiyu~Dai$^{\ddag}$
\\$\dag$. Dept.of Information Science and Electronic Engineering, Zhejiang University, China.
\\$\ddag$. Department of Electrical and Computer Engineering, North Carolina State University, USA
\\Email: hzhang17@ncsu.edu, ning\_ming@zju.edu.cn, huaiyu\_dai@ncsu.edu}
}

%
\maketitle


\begin{abstract}
A novel Rateless-coding-assisted Multi-Packet Relaying (RMPR) protocol is proposed for large-size data spreading in mobile wireless networks. With this lightweight and robust protocol, the packet redundancy is reduced by a factor of $\sqrt n$, while the spreading time is reduced at least by a factor of $\ln (n)$. Closed-form bounds and explicit non-asymptotic results are presented, which are further validated through simulations. Besides, the packet duplication phenomenon in the network setting is analyzed for the first time.
\end{abstract}

\begin{keywords}
Rateless Coding, Mobile Networks, Multi-Packet Relaying, Information Spreading.
\end{keywords}

\section{Introduction}
\noindent From the dissemination of genetic information through replications of DNA, and the spread of rumors via Twitter, to the transfer of data packets among wireless devices over electromagnetic waves, the phenomenon of information spreading influences every aspect of our lives. In these scenarios, how fast information can be spread to the whole network is of particular interest.

Information spreading in static and connected networks has
been studied in literature \cite{Gossip}. Meanwhile, tremendous research
efforts \cite{Passarella,Eun,MEG,XiaolanZhang,GroeneveltThesis,Mobile-conductance,Mobile-conductance-TWC} (and the reference therein) have been made
in theoretically modeling both inter-contact time \cite{Passarella,Eun,MEG} and
message delay \cite{XiaolanZhang,GroeneveltThesis} in mobile networks, especially
the disconnected networks. Due to space limitation, more
complete descriptions of this topic can be found in the survey
\cite{Passarella} and in our technical report \cite{tech-report}.


Our focus is slightly different, i.e. practical solution for multiple packets broadcasting. Except for a unique \emph{source} with the entire information, each of the rest nodes plays the roles of both \emph{relay} and \emph{destination}. However, this setting incurs two fundamental issues. First, due to the diversity of relay paths, a particular packet may be unnecessarily received multiple times. Second, under the network randomness, it is difficult to guarantee the reception of certain packets without repeated requesting and acknowledging. Both issues, if unsolved, undermine the system efficiency greatly.

Rateless code \cite{ratelesscode} is a class of codes designed for highly lossy channels, e.g. deep space channel. The rateless encoder generates potentially an unlimited number of distinct packets, which prevents repeated packet receiving. Besides, the rateless decoder only requires adequate number of packets to be received, rather than acknowledging specific source packets. This packet-level acknowledge-free feature inspires us to develop a lightweight and robust protocol for spreading large-size data.

Our contributions are summarize as below.
\begin{enumerate}
\item We propose a simple and easy-to-implement \emph{Rateless-coding-assisted Multi-Packet Relaying (RMPR) protocol}, where the individual packets do not have to be acknowledged. Thus, the protocol efficiency is not compromised by the relatively long network delay.

\item The RMPR protocol enhances performance in terms of both message and time efficiency. The number of redundant packets received is reduced from $\sqrt n$ to 2. The source-to-destination delay and the source-to-network spreading time are also improved by at least a factor of $\ln (n)$.

\end{enumerate}

\section{Problem Formulation and System Model}\label{ModelSection}
\subsection{Homogeneous and Stationary Mobile Network (HSMNet)}
\noindent We study a mobile network $G\left( {V\left( t \right),E\left( t \right)} \right)$ which consists of $n$ nodes moving in a given area (e.g., a unit square), according to a certain mobility model. In this study we consider a general class of mobile networks, coined as the Homogeneous and Stationary Mobile
Network (HSMNet), which is characterized by the following three properties:

\begin{itemize}
\item It is assumed that the spatial distribution of each node has converged (after sufficient evolvement) to a stationary distribution, denoted by $\pi_i(x, y)$ for each node, where $(x, y)$ is the location in the area of interest.

\item $\pi_1(x, y)=...=\pi_n(x, y)\triangleq \pi(x, y)$, which means the network nodes are homogeneous.

\item $\pi(x, y)>0, \forall x,y$, which means every node can travel to any position on the given area given enough time. It is also assumed that all nodes move at a constant speed $v$; other than this, no more specification on the mobility pattern is needed.

\end{itemize}

 One crucial parameter for the mobile network model is the transmission range $r$ within which two nodes can exchange packets. Here the transmission is assumed to be instantaneous, and the range is assumed to be $r = \Theta \left( {\frac{1}{{\sqrt n }}} \right)$, which indicates a disconnected network. Otherwise, the spreading time would be always zero, making the problem trivial.




\subsection{Packet Transmission upon Meeting}
\noindent We adopt a continuous time system model \cite{GroeneveltThesis}, instead of the slotted model. The transmission of a packet is assumed to be instantaneous and error-free due to the small packet size, and only occur upon a \emph{meeting}, which is defined as the event that two nodes travel into each other's transmission range $r$ and exchange one packet.
Note that, the mobile nodes usually move very fast and the meeting duration is very short, such as in the Vehicular Ad-hoc Networks (VANET).

The information spreading is constituted of numerous message exchanges through the``meetings". The \emph{first-meeting time} is defined as the time interval between the an arbitrary chosen starting point and the first meeting. The \emph{inter-meeting time} is defined as the time interval between two consecutive meetings.

The meeting process between any two nodes with constant speed $v$ and transmission range $r \ll 1$ is shown to be a Poisson process \cite{GroeneveltThesis}. The first-meeting time and inter-meeting time between any two nodes in an HSMNet are \emph{exponentially distributed}, defined by the \emph{key parameter $\lambda$} given by \begin{align}\label{PoissonMeeting}
\lambda \approx \frac{8vr}{\pi}\int\limits_0^1 {\int\limits_0^1 {\pi ^2\left( {x,y} \right)dx} dy} = {c_\pi}vr,
\end{align}
where ${c_\pi} = \frac{8}{\pi }\int\limits_0^1 {\int\limits_0^1 {\pi ^2\left( {x,y} \right)dx} dy}$ is a constant determined solely by the stationary distribution of the HSMNet, which means $\lambda$ is only proportional to the node speed $v$ and transmission range $r$. In particular, in the well-known \emph{random direction mobility model} and \emph{random waypoint mobility model}, $\lambda$ equals $\frac{8vr}{\pi}$ and $\frac{8\omega vr}{\pi}$ respectively, where constant $\omega \approx 1.3683$ \cite{GroeneveltThesis}.

\section{Rateless-coding-assisted Multi-Packet Relaying Scheme}
\noindent The celebrated rateless coding \cite{ratelesscode} saves signaling as well as avoids packet duplication in point-to-point transmissions. In this work, we explore its application in the network setting. We will start with a naive packet spreading protocol, which leads to a duplication factor of $\sqrt{n}$ in packet relaying. We then present our RMPR protocol, and reveal through both theoretical analysis and simulation that the corresponding expected duplication factor is 2, independent of the network size and a dramatic increase in network efficiency. We further study the average delay for packet delivery to both an arbitrary destination node and the whole network, respectively.

In the context of rateless coding, any subset of coded packets of size $l' = l \times \left( 1 + \varepsilon \right)$ is sufficient to recover the original $l$ packets with high probability \cite{ratelesscode}, where $\varepsilon>0$ is a small constant. Therefore, we can \emph{count the number of distinct packets at the destination nodes, to determine whether the $l$ original source packets are recovered}.

\subsection{Spreading Protocol Description}
\subsubsection{A naive protocol}
we first discuss a simple rateless-coding-based protocol, and reveal the severity of the packet duplication problem in mobile relay networks.

\begin{itemize}
\item The \emph{source node} transmits a new packet upon every meeting with another node. Each \emph{relay node} simply retransmits \emph{any} packet received from the source node and other relay nodes. If a relay node has multiple packets, it \emph{randomly} picks one of them and transmits to another node upon each meeting.
\end{itemize}

The source node meets the rest nodes at a rate $\left(n-1\right)\lambda$, and so is the new packet growth rate for the network. We say a packet is non-redundant if it is never received before, and denote by $\kappa$ the probability for an arbitrary node to receive a non-redundant packet upon a meeting with an arbitrary relay node\footnote{All relay nodes and rateless packets are assumed homogeneous.}. Including the meetings with the source node, the overall non-redundant rate $\bar \kappa$ is given by
\begin{align*}
\bar \kappa = \frac {n-1} n \kappa + \frac 1 n.
\end{align*}

In the long run, $\bar \kappa$ converges to the ratio between the number of different packets and the number of total packet copies in the network$^1$. A copy is generated with probability one upon a meeting involving the source node, or with probability $\kappa$ upon a meeting between two relay/destination nodes. Thus, $\bar \kappa$ may also be given by
\begin{align*}
\bar \kappa = \frac{{\left( {n - 1} \right)\lambda t }}{ {{\kappa}\left( {n - 1} \right)\left( {n - 1} \right)\lambda } t + \left( {n - 1} \right)\lambda t}.
\end{align*}

By solving the above two equations, we get $\kappa  = \frac {\sqrt n -1} {n-1}$, which approximates $\frac{1}{{\sqrt n }}$ when $n$ goes large. This indicates that only one out of $\sqrt n$ received packets is a non-duplicate one.

\emph{Discussions:} the naive protocol allows multi-hop relaying, which inevitably introduces duplicated packets via multiple routing paths. The protocol essentially becomes inefficient as network size grows.

%
%
%
%

\subsubsection{The RMPR protocol}
we further propose a protocol that ensures a constant duplication rate.
\begin{itemize}
\item The \emph{source node} transmits a code packet each time it meets another node. Each packet it generates for transmission is unique and different from any packet that is already in the network.
\item A \emph{relay/destination node} can receive packets from both the source node and other relay nodes. Moreover, a relay node \emph{only} transmits the \emph{newest packet} that is \emph{directly} received from the source node.
\end{itemize}

\emph{Discussions:} by only transmitting the packets directly received from the source node, the multi-path issue is solved, and the duplicate packets only come through the single ``source-relay-destination" path. In addition, the number of packet copies in the network is non-decreasing over time. Therefore, the optimal choice to avoid duplication is that every relay node always picks the newest packet to retransmit.

\subsection{Exploring the Phenomenon of Duplicated Packet Reception}
\subsubsection{Theoretical Analysis}
\begin{figure}[h] \centering
\includegraphics[width=0.5\textwidth]{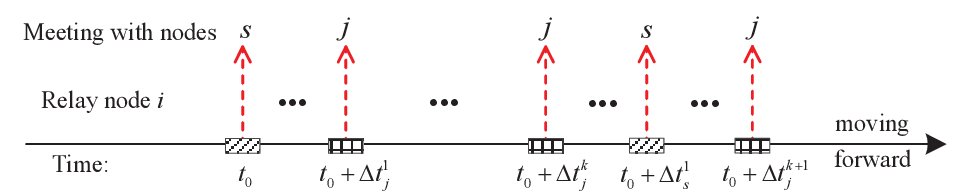}
\caption {Illustration of Received Packet Duplication}
\label{Duplication}
\end{figure}
Since a relay node only transmits the newest packet received from the source node, packet duplications only occur when some relay node, say $i$, meets the same destination node, say $j$, multiple times between two consecutive meetings with the source node.
Fig.~\ref{Duplication} illustrates the case that destination node $j$ receives the same packet $k$ times from node $i$, in which $k-1$ packets are duplications. In the figure, $t_0$ denotes one moment at which node $i$ meets the source node, while $t_0 + \Delta t_s^k$ and $t_0 + \Delta t_j^k$ denote the moments at which node $i$ meets the source node and node $j$ for the $k$th time, respectively.

According to the Poisson inter-meeting model, each of the intervals between two consecutive meetings of two nodes is i.i.d exponential. Thus,
\begin{align*}
\Delta t_s^1 \sim Exp\left( {\lambda } \right), \Delta t_j^k \sim Erlang\left( {k,\lambda } \right).
\end{align*}
Considering these properties, we can get the expected copies of duplicate packets as stated in the following Theorem:

\begin{theorem}[Duplication Analysis]\label{theorem-duplication}
Under the RMPR protocol,for any destination node, any packet is expected to be received two times on average.
\end{theorem}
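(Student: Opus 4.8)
The plan is to isolate a single packet $P$ that relay node $i$ receives \emph{directly} from the source at some instant $t_0$, and to count how many times a fixed destination $j$ receives this particular $P$. Because the source emits a distinct packet at every meeting, $P$ is held by no relay other than $i$; hence every reception of $P$ at $j$ must come from $i$. Moreover, under the RMPR rule, $i$ forwards $P$ (as its ``newest packet from the source'') only during the interval $[t_0,\,t_0+\Delta t_s^1]$, i.e.\ until $i$ meets the source again. Thus the number $K$ of times $j$ receives $P$ equals the number of $i$--$j$ meetings falling inside this window, and the theorem reduces to showing $\mathbb{E}[K \mid K\ge 1]=2$.

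Next I would exploit the Poisson structure established in \eqref{PoissonMeeting}. By memorylessness and independent increments, the $i$--$j$ meeting epochs after $t_0$ form a rate-$\lambda$ Poisson process whose $k$th point is $\Delta t_j^k \sim Erlang(k,\lambda)$, independent of the $i$--source process whose first post-$t_0$ point is $\Delta t_s^1 \sim Exp(\lambda)$. The destination receives $P$ at least $k$ times exactly when its $k$th meeting with $i$ precedes $i$'s next meeting with the source, so
\begin{equation*}
\Pr\{K \ge k\} = \Pr\{\Delta t_j^k < \Delta t_s^1\}.
\end{equation*}

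To evaluate this I would merge the two independent rate-$\lambda$ processes into one rate-$2\lambda$ process in which each arrival is independently of type ``$j$'' or ``source'' with probability $1/2$. The event $\{\Delta t_j^k < \Delta t_s^1\}$ says the first $k$ arrivals of the merged process are all of type ``$j$'', which occurs with probability $2^{-k}$ (equivalently, one integrates the $Erlang(k,\lambda)$ density against the exponential tail). Hence $\Pr\{K\ge k\}=2^{-k}$, so $K$ restricted to $\{K\ge 1\}$ is geometric with parameter $1/2$, giving
\begin{equation*}
\mathbb{E}[K\mid K\ge 1]=\frac{\mathbb{E}[K]}{\Pr\{K\ge 1\}}=\frac{\sum_{k\ge 1}2^{-k}}{2^{-1}}=\frac{1}{1/2}=2,
\end{equation*}
which is exactly the claim: a received packet is delivered twice on average, i.e.\ one genuine copy plus one duplicate.

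The step I expect to be the main obstacle is the modeling justification rather than the arithmetic. One must argue carefully that, conditioned on $i$ holding $P$ over the window, the $i$--$j$ and $i$--source meeting processes are independent rate-$\lambda$ Poisson processes that effectively restart at $t_0$, so that no meeting before $t_0$ biases the count; and one must identify the right quantity, namely the reception count conditioned on $K\ge 1$ rather than the unconditional mean (which equals $1$, since $\mathbb{E}[K]=\sum_{k\ge1}2^{-k}=1$). Once these modeling points are pinned down, the symmetry/merging argument yielding $\Pr\{K\ge k\}=2^{-k}$ is immediate.
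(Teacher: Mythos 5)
Your proposal is correct and takes essentially the same approach as the paper: both reduce the problem to comparing $\Delta t_j^k \sim Erlang\left(k,\lambda\right)$ against $\Delta t_s^1 \sim Exp\left(\lambda\right)$, condition on the packet being received at least once (i.e., $\Delta t_j^1 < \Delta t_s^1$), and rest on the identity $\Pr\left\{ \Delta t_j^k < \Delta t_s^1 \right\} = 2^{-k}$. The only distinctions are cosmetic: you obtain $2^{-k}$ via the Poisson merging/thinning argument (two independent rate-$\lambda$ streams, each arrival of type ``$j$'' with probability $\frac{1}{2}$), which is slicker than the paper's direct integration of the Erlang cdf against the exponential density through the incomplete Gamma function, and you read off the conditional mean from the resulting geometric law via $\mathbb{E}\left[K \mid K \ge 1\right] = \mathbb{E}\left[K\right]/\Pr\left\{K \ge 1\right\} = 2$, whereas the paper sums the $(k-1)$-weighted pointwise probabilities to get $d=1$ redundant copy and adds the original reception.
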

\begin{proof}
According to Fig.~\ref{Duplication}, the probability that node $j$ receives the same packet $k$ times, given that the packet is received by node $j$ at least once, is
\begin{align}\label{k-probability}
&p\left( {\left. {\Delta t_j^k < \Delta t_s^1 < \Delta t_j^{k + 1}} \right|\Delta t_j^1 < \Delta t_s^1} \right)\nonumber\\
& = \frac{{p\left( {\Delta t_j^k < \Delta t_s^1 } \right) - p\left( {\Delta t_j^{k + 1} < \Delta t_s^1 } \right)}}{{p\left( {\Delta t_j^1 < \Delta t_s^1} \right)}}.
\end{align}

Denote by $d$ the average number of redundant copies of a certain packet, given by
\begin{align}\label{d_redundant}
d &= \sum\limits_{k = 2}^\infty  {p\left( {\left. {\Delta t_j^k < \Delta t_s^1  < \Delta t_j^{k + 1}} \right|\Delta t_j^1 < \Delta t_s^1 } \right)\left( {k - 1} \right)} \nonumber \\
& = \frac{{\sum\limits_{k = 2}^\infty  {p\left( {\Delta t_j^k < \Delta t_s^1 } \right)} }}{{p\left( {\Delta t_j^1 < \Delta t_s^1 } \right)}}.
\end{align}

Since $\Delta t_j^1$ and $\Delta t_s^1$ are i.i.d exponential with parameter $\lambda$,
\begin{align}\label{tj1}
p\left( {\Delta t_j^1 < \Delta t_s^1} \right) = \int\limits_0^\infty  {p\left( {\left. {\Delta t_j^1 < \tau _s^1} \right|\Delta t_s^1 = \tau _s^1} \right)d\tau _s^1} \nonumber \\
 = \int\limits_0^\infty  {\lambda {e^{ - \lambda \tau _s^1}}\left( {1 - {e^{ - \lambda \tau _s^1}}} \right)d\tau _s^1}  = \frac{1}{2}.
\end{align}

Similarly, since $\Delta t_j^k$ and $\Delta t_s^1$ are independent, \begin{align}\label{tjk}
p\left( {\Delta t_j^k < \Delta t_s^1} \right) &= \int\limits_0^\infty  {p\left( {\left. {\Delta t_j^k < \tau _s^1} \right|\Delta t_s^1 = \tau _s^1} \right)d\tau _s^1} \nonumber \\
& \mathop  = \limits^{(a)} \int\limits_0^\infty  {\lambda {e^{ - \lambda \tau _s^1}}\left( {1 - \frac{{\Gamma \left( {k,\lambda \tau _s^1} \right)}}{{\Gamma \left( k \right)}}} \right)d\tau _s^1} \nonumber \\
& = \frac{1}{2} - \sum\limits_{m = 1}^{k - 1} {\frac{{\Gamma \left( {m + 1} \right)}}{{m!{2^{m + 1}}}}} n
 = \frac{1}{{{2^k}}},
\end{align}
where $\Gamma \left( k \right)$ is the Gamma function and $\Gamma \left( {k,x} \right)$ is the upper incomplete gamma function; $(a)$ is obtained as the cdf of Erlang distribution is given by
\begin{align*}
p\left( {\Delta t_j^k < \tau } \right) = F\left( {\tau ,k,\lambda } \right) = 1 - \frac{{\Gamma \left( {k,\lambda \tau } \right)}}{{\Gamma \left( k \right)}};
\end{align*}


Substituting \eqref{tj1} and \eqref{tjk} into \eqref{d_redundant}, $d$ is given by
\begin{align}\label{duplication-value}
d = 2{\sum\limits_{k = 2}^\infty  {{2^{ - k}}} } = 1.
\end{align}
 \end{proof}
\emph{Remarks:} In contrast to the naive protocol, the amount of duplicate packets at each node does not grow with the network size, only being a small constant that can be accurately evaluated.

\subsubsection{Verification through Simulation}
Fig.~\ref{k-prob} compares the simulated and theoretically calculated value of \eqref{k-probability}. $\lambda$ is chosen as $1$. It is shown that in all cases, the theoretic and experimental results match well, which lays the foundation for further estimating the redundant amount.

\begin{figure}[h]
\begin{minipage}[t]{0.49\linewidth}
\centering
\includegraphics[width=1.0\textwidth]{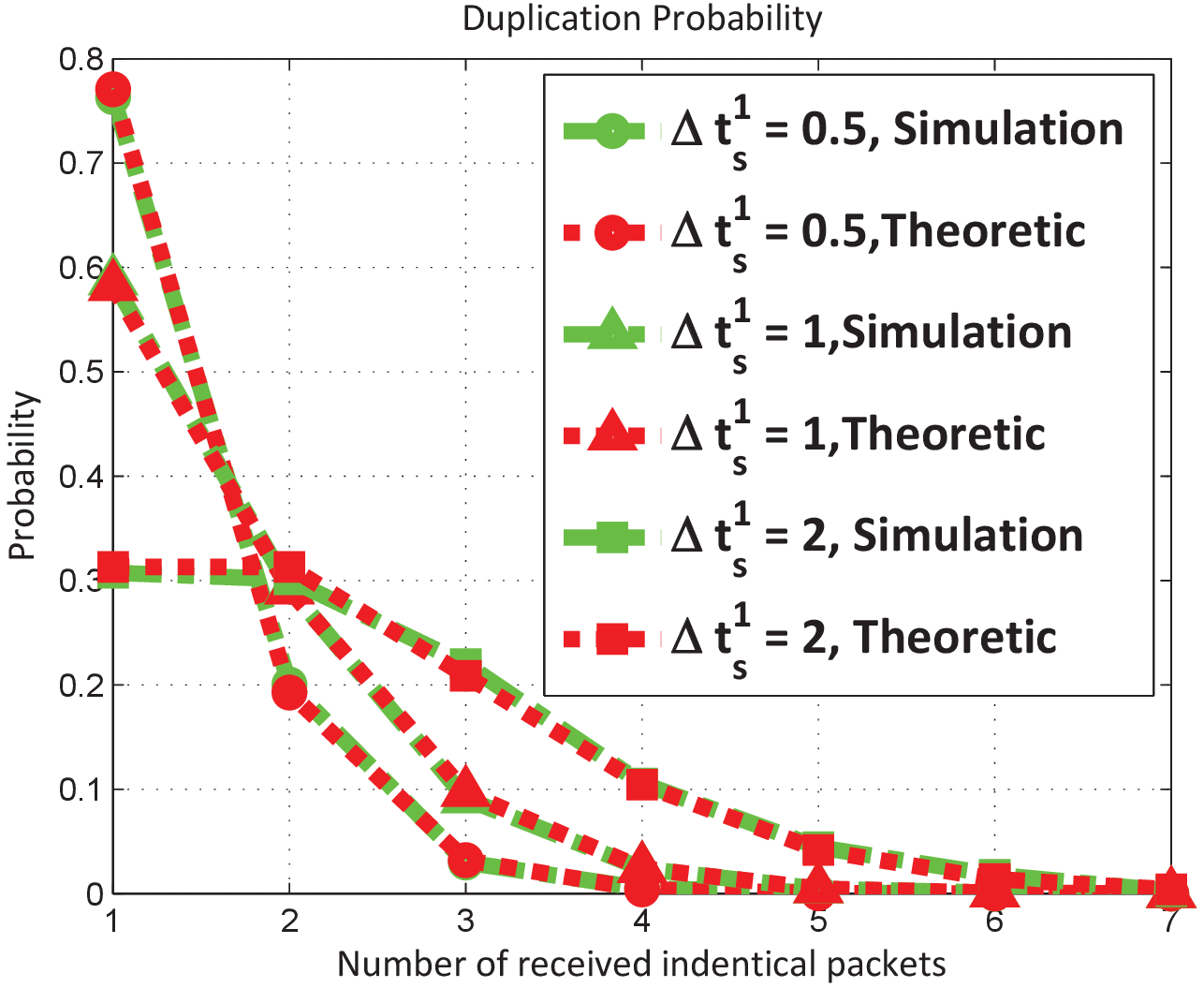}
\caption {Probability of duplication}
\label{k-prob}
\end{minipage}
\hfill
\begin{minipage}[t]{0.49\linewidth}
\centering
\includegraphics[width=1.0\textwidth]{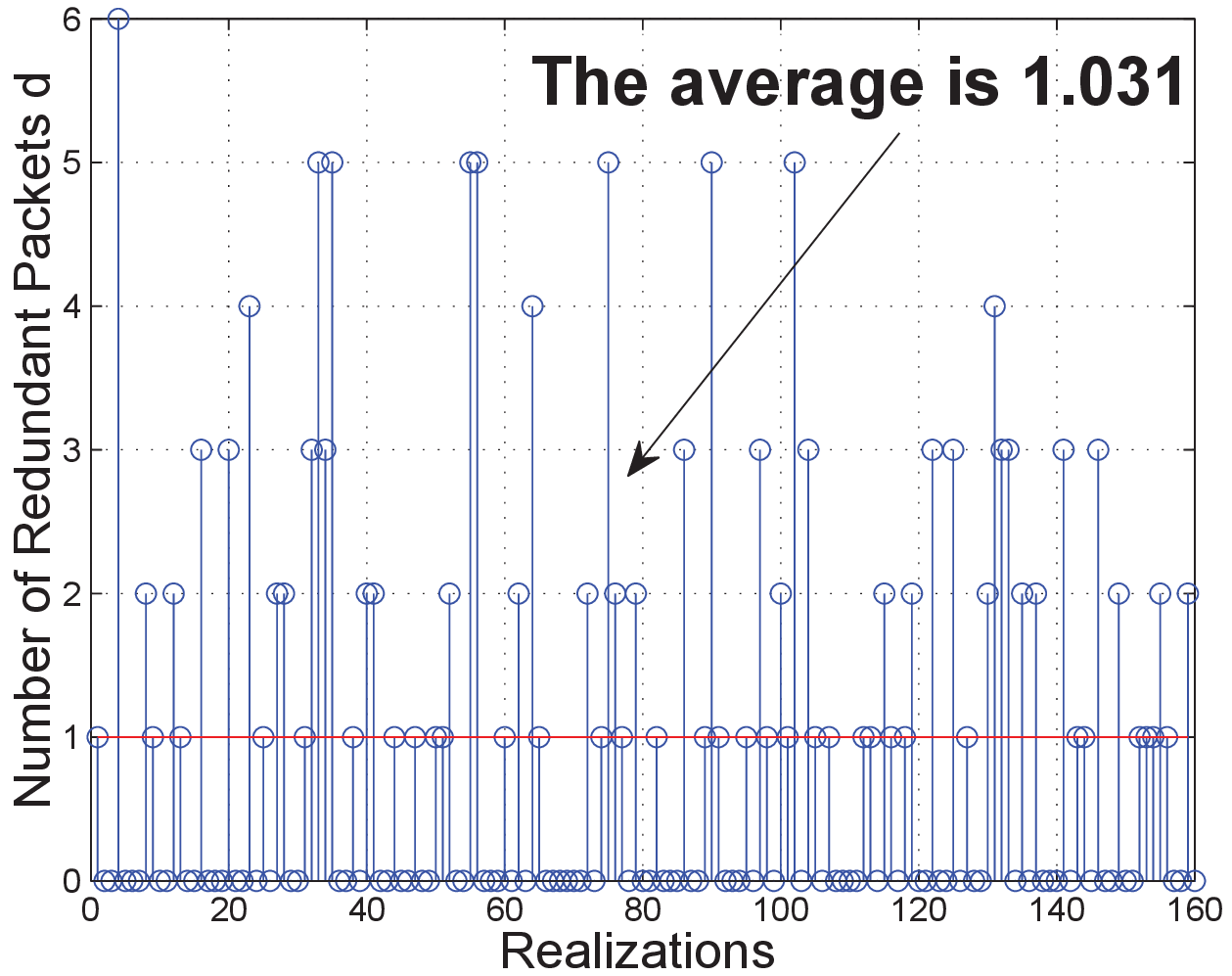}
\caption {Counted redundant packets}
\label{n_redundant}
\end{minipage}
\end{figure}

In Fig.~\ref{n_redundant}, we randomly pick one relay node and one destination node, and simulate the meeting process between them and the source node. For each packet received, we count the number of redundant packets. Among the $150+$ realizations, the number of redundant packets varies from $0$ to $6$. However, the average value is calculated as $1.031$, which confirms the result of Theorem \ref{theorem-duplication}.

\subsection{Analysis of the $l$-packet spreading time}
\noindent To evaluate the rateless-coding-assisted $l$-packet spreading time, we model the spreading as two concurrent processes, as illustrated in Fig.~\ref{StateTransitionDiagramMultiple}. The vertical Markov chain represents the \emph{Relay Initialization} process and the horizontal Markov chains represent the \emph{Packet Collection} processes.
\begin{figure}[h]\center
\includegraphics[width=0.40\textwidth]{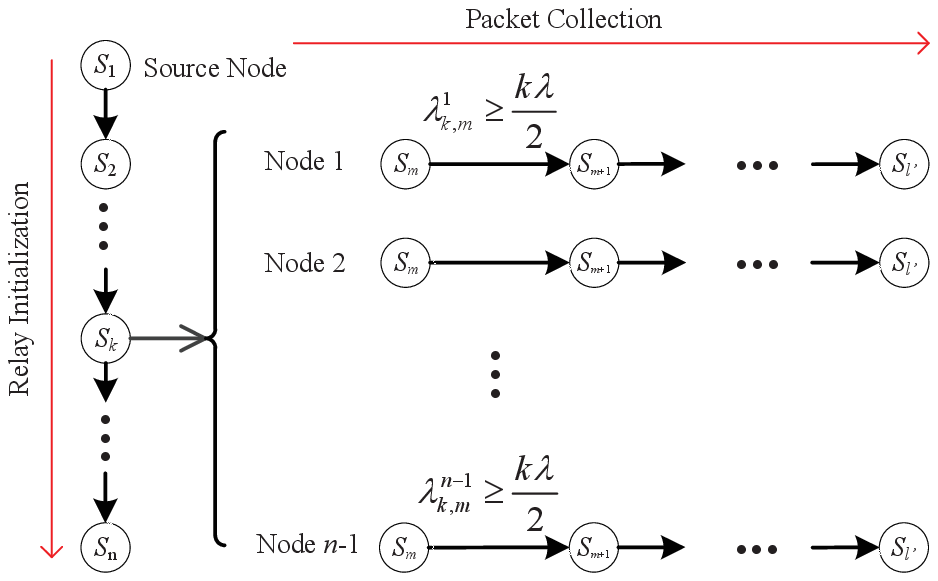}
\caption {State Transition Diagram for Multiple Packet Spreading}
\label{StateTransitionDiagramMultiple}
\end{figure}

At the very beginning, there are $n-1$ destination nodes. \emph{Relay Initialization} means the destination nodes gradually assume dual roles as relay nodes by collecting packets directly from the source. The initialization process starts with no relay nodes and ends with $n-1$ relay nodes (except the source node). State $\tilde{S}_k$ on the vertical chain denotes there are $k$ nodes ($k-1$ relay nodes and the source node) disseminating packets.

During relay initialization, each destination node is also collecting new packets both from the source node and the relay nodes, namely \emph{Packet Collection}. As shown in Fig.~\ref{StateTransitionDiagramMultiple}, the packet collection processes can be viewed as $n-1$ horizontal Markov chains, each corresponds to a destination node. State $S_m$ on the chain denotes  $m$ packets have been collected. After collecting a new packet, the corresponding horizontal chain moves to the next state. The packet collection process stops when every node has received no less than $l'$ \emph{non-duplicate} packets, at which time all receivers can recover the $l$ original source packets with high probability.

When there are $k-1$ relay nodes (i.e. at state $\tilde{S}_k$ in relay initialization), the transitions in the packet collection process can be analyzed as follows. According to Theorem \ref{theorem-duplication}, a destination node takes $\frac 1 2$ probability to move to the next state upon every meeting with a relay node, otherwise directly move to the next state upon a meeting with the source node.

Thus, the overall packet collection rate is
\begin{align*}
\lambda _{k,m}^i = \left\{ {\begin{array}{*{20}{c}}
{\lambda  + \left( {k - 2} \right)\frac{\lambda }{2} = \frac{{k\lambda }}{2},\forall i \in \left\{ {relay\;nodes} \right\},}\\
{\lambda  + \left( {k - 1} \right)\frac{\lambda }{2} = \frac{{\left( {k + 1} \right)\lambda }}{2},\forall i \notin \left\{ {relay\;nodes} \right\},}
\end{array}} \right.
\end{align*}
for each state $S_m$ in the packet collection process.

\emph{Remarks:} the packet collection rate for each nodes is solely controlled by the number of relay nodes in the network: the more relay nodes, the faster a new packet is collected. To simplify the analysis, we assume $\lambda _{k,m}^i \approx \frac{{k\lambda }}{2}$ for all $i$, which will result in a slightly longer spreading time.

\subsubsection{The average number of distinct packets collected in state $\tilde{S}_k$}
The sojourn time $T_k$ for $\tilde{S}_k$ in relay initialization is exponentially distributed, with PDF
\begin{align*}
f\left( \tau  \right) = \left( {n - k} \right)\lambda {e^{ - \left( {n - k} \right)\lambda \tau }}, \quad (\tau >0).
\end{align*}

The number of packets collected in $\tau$ time, denote by $\Delta {l_k^\tau}$, is Poisson distributed, and
\begin{align*}
p\left( {\Delta l_k^\tau  = i} \right) = \frac{{{e^{ - \frac{{k\lambda }}{2}\tau }}{{\left(\frac{{k\lambda }}{2}\tau \right)}^i}}}{{i!}}, \quad (i \in N).
\end{align*}

Denote by $\Delta {l_k}$ the number of packets collected in $\tilde{S}_k$, then the PDF of $\Delta {l_k}$ is derived as
\begin{align}\label{geometric-accum}
p\left( {\Delta {l_k} = i} \right) &= \int\limits_0^\infty  {f\left( \tau  \right)P\left( {\Delta l_k^\tau  = i} \right)d\tau }
= \frac{{2n - 2k}}{{2n - k}}{\left( {\frac{k}{{2n - k}}} \right)^i}.
\end{align}

If we define ${p_k} = \frac{{2n - 2k}}{{2n - k}}$, then the PDF of \eqref{geometric-accum} can be rewritten as $p\left( {\Delta {l_k} = i} \right) = {\left( {1 - {p_k}} \right)^i}{p_k}$, which is a geometric distribution.

The average number of new packets collected by each node in $\tilde{S}_k$ is given by
\begin{align*}
E\left[ {\Delta {l_k}} \right] = \sum\limits_{i = 0}^\infty  {\left( {\frac{{2n - 2k}}{{2n - k}}{{\left( {\frac{k}{{2n - k}}} \right)}^i} \times i} \right)} = \frac{k}{{2\left( {n - k} \right)}}.
\end{align*}

\subsubsection{The average number of distinct packets collected in relay initiation}
\begin{lemma}
For large enough $l$, the average number of distinct packets collected by each node at the end of relay initiation, as denoted by $l_0$, is approximately $\frac{{n\ln n}}{2}$.
\end{lemma}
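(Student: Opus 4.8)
The plan is to express $l_0$ as the total expected number of distinct packets accumulated over the entire relay-initialization process, and then evaluate the resulting sum asymptotically. The relay initialization traverses the vertical-chain states $\tilde{S}_k$ as $k$ runs from $1$ (source only, no relays) up to $n-1$ (the state just before every node has become a relay), and the per-state increments $\Delta l_k$ are the successive disjoint contributions to each node's running packet tally. Hence, by linearity of expectation,
\begin{align*}
l_0 = \sum_{k=1}^{n-1} E\left[ \Delta l_k \right] = \sum_{k=1}^{n-1} \frac{k}{2\left(n-k\right)},
\end{align*}
where I have inserted the per-state mean $E\left[\Delta l_k\right] = \frac{k}{2(n-k)}$ established in the preceding subsection.

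The next step is to collapse this sum into a harmonic number. Substituting $j = n-k$ rewrites the summand as $\frac{n-j}{2j}$, giving
\begin{align*}
l_0 = \frac{1}{2}\sum_{j=1}^{n-1}\left(\frac{n}{j} - 1\right) = \frac{1}{2}\Bigl(n H_{n-1} - \left(n-1\right)\Bigr),
\end{align*}
where $H_{n-1} = \sum_{j=1}^{n-1} \frac{1}{j}$ denotes the $(n-1)$-th harmonic number. I would then invoke the standard asymptotic expansion $H_{n-1} = \ln n + \gamma + o(1)$ (with $\gamma$ the Euler--Mascheroni constant) to obtain
\begin{align*}
l_0 = \frac{n\ln n}{2} + \frac{\left(\gamma - 1\right)n}{2} + O(1).
\end{align*}
Since the correction term is only $O(n) = o(n\ln n)$, the leading contribution is $\frac{n\ln n}{2}$, which yields the claimed approximation $l_0 \approx \frac{n\ln n}{2}$ for large $n$.

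I expect the only genuinely delicate point to be the justification that the aggregate count of \emph{distinct} packets equals the plain sum of the per-state expectations. This relies on two facts: first, that each $\Delta l_k$ already counts only non-duplicate receptions, because the duplication has been netted out through the $\frac{1}{2}$ advancement probability supplied by Theorem~\ref{theorem-duplication}; and second, that the contributions across different states are non-overlapping, so no further inclusion--exclusion correction is needed. The phrase ``for large enough $l$'' in the statement presumably guards against the stopping rule (collection halts once $l'$ packets are gathered) biasing this accounting, ensuring the process genuinely runs through all of relay initialization rather than terminating early. Once that bookkeeping is secured, everything reduces to the routine harmonic-sum evaluation and asymptotic truncation sketched above.
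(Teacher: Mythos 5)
Your proof is correct and takes essentially the same route as the paper's: both sum the per-state means $E\left[\Delta l_k\right] = \frac{k}{2\left(n-k\right)}$ over $k = 1, \dots, n-1$ and reduce the sum to a harmonic number, invoking $H_{n-1} = \ln n + \gamma + o\left(1\right)$ to extract the leading term $\frac{n \ln n}{2}$. Your explicit $j = n-k$ substitution and the closing remarks on distinctness and the role of ``large enough $l$'' are just a slightly more careful rendering of the same two-line computation the paper gives.
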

\begin{proof}
By summing the number of distinct packets collected in each state $\tilde{S}_k$, the total number is given by
\begin{align*}
{l_0} &= \sum\limits_{k = 1}^{n - 1} {\frac{k}{{2\left( {n - k} \right)}}} = \frac{n}{2}\left( {\sum\limits_{k = 1}^{n - 1} {\frac{1}{{n - k}}}  - 1} \right)\nonumber \\
&= \frac{n}{2}\left( {\ln n + \gamma  - 1} \right) + \frac 1 2 + o\left( 1 \right) \approx \frac{{n\ln n}}{2},
\end{align*}
where $\gamma \approx 0.57721$ is the Euler-Mascheroni constant.
\end{proof}

\emph{Discussions:} For not large enough $l$, each node may already obtain $l$ distinct packets before reaching $\tilde{S}_k$ in relay initialization. In the extreme case when $l=1$, it is straightforward that, with the help of relay nodes, the source node doesn't need to meet all $n-1$ nodes to complete the spreading. However, for large enough $l$, the number of packets obtained by each node may not be enough for decoding, thus the spreading continues.

\subsubsection{The average source-to-destination delay}
Let $\tilde{S}_{k^*}$ be the relay initialization state in which each node has received enough packets for decoding, where $k^*$ is the ending state number. Denote by $l\left( {{k^*}} \right)$ the total number of packets received from $\tilde{S}_1$ to $\tilde{S}_{k^*}$. We have
\begin{align}\label{E-l-k*}
&E\left[ {l\left( {{k^*}} \right)} \right] = \sum\limits_{k = 1}^{{k^*}} {E\left[ {\Delta {l_k}} \right]} = \sum\limits_{k = 1}^{{k^*}} {\frac{k}{{2\left( {n - k} \right)}}} \nonumber \\
&= \frac{{n\left( {\ln n - \ln \left( {n - {k^*}} \right)} \right) - {k^*}}}{2} + o\left( 1 \right) \nonumber \\
& = \Theta \left( {n\ln \left( {\frac{n}{{n - {k^*}}}} \right)} \right).
\end{align}
\begin{lemma}\label{Ending-State}
When $l$ is small enough, the ending state number $k^*$ can be numerically obtained by solving $E\left[ {l\left( {{k^*}} \right)} \right] = l$ in \eqref{E-l-k*}. Otherwise, when $l$ is large enough, $k^*=n$.
\end{lemma}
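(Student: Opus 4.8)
The plan is to reduce the whole statement to a monotonicity-plus-threshold argument, using the two ingredients already established: the cumulative count $E[l(k^*)]$ is increasing in $k^*$, and its value at the close of relay initialization is exactly the quantity $l_0 \approx \frac{n\ln n}{2}$ from the preceding lemma. That value $l_0$ will serve as the dividing line between the ``small $l$'' and ``large $l$'' regimes, so the only real content is deciding on which side of $l_0$ the decoding threshold falls.

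First I would establish strict monotonicity. Since each per-state increment satisfies $E[\Delta l_k] = \frac{k}{2(n-k)} > 0$ for every $1 \le k \le n-1$, the partial sums $E[l(k^*)] = \sum_{k=1}^{k^*} E[\Delta l_k]$ are strictly increasing in $k^*$; equivalently, the closed form in \eqref{E-l-k*} read as a function of a continuous $k^*$ has derivative $\frac{k^*}{2(n-k^*)} > 0$ on $[1,n)$ and is therefore invertible there. This invertibility is precisely what makes ``solving $E[l(k^*)] = l$'' a well-posed operation.

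Next I would identify the attainable range. The minimum is $E[l(1)] = \frac{1}{2(n-1)}$, and the maximum, reached when the final destination node becomes a relay, is $E[l(n-1)] = l_0 \approx \frac{n\ln n}{2}$ by the previous lemma. Because a node can decode exactly once its accumulated number of distinct packets first reaches the threshold $l$ (equivalently $l' = l(1+\varepsilon)$), the ending state is characterized as the least $k^*$ with $E[l(k^*)] \ge l$. The case split is then immediate: if $l \le l_0$, the target $l$ lies in the attainable interval $[E[l(1)], l_0]$, so monotonicity gives a unique $k^*$ with $E[l(k^*)] = l$, recovered numerically because \eqref{E-l-k*} involves only elementary and logarithmic terms; if instead $l > l_0$, then $E[l(k^*)] \le l_0 < l$ throughout relay initialization, decoding never completes while relays are still being recruited, the process runs to completion, and $k^* = n$.

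The step I expect to be the main obstacle is the boundary behavior as $k^* \to n$. The continuous closed form in \eqref{E-l-k*} diverges there (its $\ln(n-k^*)$ term blows up), which naively suggests an interior solution exists for every $l$; the discrete relay-initialization process, however, terminates at the finite value $l_0$. I must therefore argue carefully that the genuine threshold separating the two regimes is the \emph{discrete} endpoint $l_0$, not the continuous limit, and interpret $k^* = n$ not as an equation root but as the signal that the remaining $l - l_0$ packets must be gathered in a subsequent all-relay collection phase.
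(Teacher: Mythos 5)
Your proposal is correct, but there is nothing in the paper to compare it against: the authors write only ``The proof is omitted in the interest of space,'' so your argument is a genuine filling-in rather than a rederivation. The skeleton you use --- strict monotonicity of the partial sums $E[l(k^*)] = \sum_{k=1}^{k^*} \frac{k}{2(n-k)}$, identification of the attainable range $\left[\frac{1}{2(n-1)},\, l_0\right]$ with $l_0 \approx \frac{n\ln n}{2}$ from the preceding lemma, and the threshold dichotomy at $l_0$ --- is exactly the argument the lemma implicitly relies on, and it is consistent with how the paper then uses $k^*$ in Theorem~\ref{S2DDelay} (interior root for small $l$; $k^*=n$ followed by the $T_{residual}$ all-relay collection phase for $l > l_0$). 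Your final paragraph is the most valuable part: the continuous closed form in \eqref{E-l-k*} indeed diverges as $k^*\to n$ through the $\ln(n-k^*)$ term, so a naive root-finding reading would wrongly produce an interior solution for every $l$; pinning the regime boundary to the \emph{discrete} endpoint $E[l(n-1)] = l_0$ and reading $k^*=n$ as ``initialization completes, residual collection begins'' is precisely the interpretation the rest of the paper assumes without stating. Two minor points worth tightening: since $k^*$ is integer-valued, exact equality $E[l(k^*)]=l$ is generally unattainable, so the statement should be read either via the continuous interpolation (where your positive-derivative computation $\frac{k^*}{2(n-k^*)}$ makes inversion well-posed) or as the least $k^*$ with $E[l(k^*)]\ge l$, which you do flag but could state as the definition up front; and your attainable-interval phrasing ``$l$ lies in $[E[l(1)], l_0]$'' is vacuously handled by $l\ge 1 > \frac{1}{2(n-1)}$, which deserves one line so the case split is airtight.
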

\begin{proof}
The proof is omitted in the interest of space.
\end{proof}

\emph{Definition:} the relay-assisted packet collection delay under the RMPR scheme, as denoted by ${{D_{r,r}}\left( l \right)}$, is the source-to-destination delay for an arbitrarily chosen node to receive enough packets for decoding the $l$ source packets.
\begin{theorem}\label{S2DDelay}
When $l$ is small enough, the average $l$-packet collection delay under the RMPR protocol, as denoted by $E\left[{{D_{r,r}}\left( l \right)}\right]$, is approximately $\frac{{2l + {k^*}}}{{n\lambda }}$; for the special case when $l = o\left( {n\ln n} \right)$, $E\left[{{D_{r,r}}\left( l \right)}\right]$ is estimated in closed-form as $\frac{{2l + 2\sqrt {nl} }}{{n\lambda }}$; when $l$ is large enough, the average delay $E\left[{{D_{r,r}}\left( l \right)}\right]$ is approximately $\frac{{2l}}{{n\lambda }}$.
\end{theorem}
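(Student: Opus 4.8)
The plan is to read the source-to-destination delay directly off the coupled Markov model of Fig.~\ref{StateTransitionDiagramMultiple}: a destination accumulates distinct packets exactly while the vertical (relay-initialization) chain dwells in the states $\tilde{S}_1,\tilde{S}_2,\dots$, so the delay to gather $l$ packets is the total sojourn time spent up to the state $\tilde{S}_{k^*}$ in which the running count first reaches $l$. Since Lemma~\ref{Ending-State} fixes $k^*$ through the deterministic relation $E[l(k^*)]=l$, and each sojourn time $T_k$ is exponential with rate $(n-k)\lambda$, I would start from
\begin{align*}
E\left[D_{r,r}(l)\right] \approx \sum_{k=1}^{k^*} E\left[T_k\right] = \frac{1}{\lambda}\sum_{k=1}^{k^*}\frac{1}{n-k}.
\end{align*}

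First I would evaluate the sum. Re-indexing by $j=n-k$ turns it into a difference of harmonic numbers, $\sum_{k=1}^{k^*}\tfrac{1}{n-k}=H_{n-1}-H_{n-k^*-1}$ with $H_m=\sum_{i=1}^{m}1/i$, and the asymptotics $H_m=\ln m+\gamma+o(1)$ collapse this to $\ln\frac{n}{n-k^*}$ at leading order, giving $E[D_{r,r}(l)]\approx\frac{1}{\lambda}\ln\frac{n}{n-k^*}$. The decisive step is to eliminate the logarithm using the already-derived packet-count identity \eqref{E-l-k*}: the ending condition $E[l(k^*)]=l$ reads $n\ln\frac{n}{n-k^*}-k^*=2l$, i.e. $\ln\frac{n}{n-k^*}=\frac{2l+k^*}{n}$, and substituting this back yields the first claimed form $\frac{2l+k^*}{n\lambda}$.

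For the closed-form sub-case I would expand $\ln\frac{n}{n-k^*}=\frac{k^*}{n}+\frac{(k^*)^2}{2n^2}+\cdots$ inside $2l=n\ln\frac{n}{n-k^*}-k^*$; the linear terms cancel and the quadratic term dominates, leaving $2l\approx\frac{(k^*)^2}{2n}$, hence $k^*\approx 2\sqrt{nl}$, which fed into $\frac{2l+k^*}{n\lambda}$ gives $\frac{2l+2\sqrt{nl}}{n\lambda}$. The large-$l$ regime must be treated on its own, because the preceding steps presuppose $k^*<n$ whereas $T_n$ degenerates (its rate $(n-n)\lambda$ is zero). Here I would split the delay into the full relay-initialization time $\sum_{k=1}^{n-1}E[T_k]=\frac{1}{\lambda}H_{n-1}\approx\frac{\ln n}{\lambda}$ and the additional time spent in the terminal state $\tilde{S}_n$ collecting the residual $l-l_0$ packets at rate $\frac{n\lambda}{2}$, namely $\frac{2(l-l_0)}{n\lambda}$ with $l_0\approx\frac{n\ln n}{2}$ from the preceding lemma; the two $\frac{\ln n}{\lambda}$ contributions cancel and leave $\frac{2l}{n\lambda}$.

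The hard part will not be any one computation but rather pinning down the three regimes and justifying the approximations that knit them together. Two points deserve care. First, replacing the random stopping state by the deterministic surrogate $k^*$ silently swaps an expectation for a nonlinear inversion of \eqref{E-l-k*}, so I would want to argue that the fluctuations of the collected count about its mean are small enough not to shift the leading-order delay. Second, the Taylor step underpinning the $2\sqrt{nl}$ estimate is legitimate only when $k^*/n\to0$, i.e. when the cubic remainder $\tfrac{(k^*)^3}{3n^2}$ is negligible against $\tfrac{(k^*)^2}{2n}$; this holds for $l=o(n)$, so I would check that the stated window keeps $2\sqrt{nl}/n\to0$ and that the discarded harmonic and expansion remainders are indeed $o(2l+k^*)$ throughout, flagging that for $l$ growing as fast as $\Theta(n)$ the closed form is only order-accurate rather than asymptotically exact.
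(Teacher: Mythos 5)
Your proposal is correct and follows essentially the same route as the paper: the same sum-of-sojourn-times starting point $\frac{1}{\lambda}\sum_{k=1}^{k^*}\frac{1}{n-k}$ eliminated via the packet-count relation \eqref{E-l-k*} to get $\frac{2l+k^*}{n\lambda}$, the same quadratic approximation yielding $k^*\approx 2\sqrt{nl}$, and the same $T_{complete}+T_{residual}$ decomposition with the $\frac{\ln n}{\lambda}$ cancellation for large $l$. The only cosmetic difference is that you detour through harmonic-number asymptotics where the paper uses the exact algebraic identity $\frac{k}{2(n-k)}=\frac{n}{2(n-k)}-\frac{1}{2}$ directly, and your closing observation that the Taylor step really requires $l=o(n)$ rather than the stated $l=o(n\ln n)$ is a legitimate refinement the paper glosses over.
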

\begin{proof}
When $l$ is small enough, ${{D_{r,r}}\left( l \right)}$ is estimated as the sum of sojourn time from state $\tilde{S}_1$ to $\tilde{S}_{k^*}$, i.e.,
\begin{align}\label{Sum-Sojourn}
E\left[ {{D_{r,r}}\left( l \right)} \right] = \sum\limits_{k = 1}^{{k^*}} {E\left[ {{T_k}} \right]}  = \frac{1}{\lambda }\sum\limits_{k = 1}^{{k^*}} {\frac{1}{{\left( {n - k} \right)}}}.
\end{align}

Combining \eqref{Sum-Sojourn} with \eqref{E-l-k*}, we have
\begin{align*}
E\left[ {l\left( {{k^*}} \right)} \right] = \frac{n}{2}\sum\limits_{k = 1}^{{k^*}} {\frac{1}{{\left( {n - k} \right)}}}  - \frac{{{k^*}}}{2} = \frac{{n\lambda }}{2}E\left[ {{D_{r,r}}\left( l \right)} \right] - \frac{{{k^*}}}{2} = l.
\end{align*}

Thus,
\begin{align}\label{D-case1}
E\left[ {{D_{r,r}}\left( l \right)} \right] = \frac{{2l + {k^*}}}{{n\lambda }},
\end{align}
where $k^*$ is obtained according to \emph{Lemma \ref{Ending-State}}.

Though $k^*$ may not be given in closed-form, in the special case when $l$ is so small that ${k^*} = o\left( n \right)$, we may obtain a closed-form approximate $k^*$ by solving the following equation
\begin{align*}
E\left[ {l\left( {{k^*}} \right)} \right] = \sum\limits_{k = 1}^{{k^*}} {\frac{k}{{2\left( {n - k} \right)}}} \approx \frac{{\sum\limits_{k = 1}^{{k^*}} k }}{{2n}} = \frac{{{{\left( {{k^*}} \right)}^2} + {k^*}}}{{4n}} = l,
\end{align*}
and the solution is ${k^*} = 2\sqrt {nl}  - \frac{1}{2} \approx 2\sqrt {nl}$. Thus the average delay is obtained by substituting ${k^*}$ into \eqref{D-case1}
\begin{align}
E\left[ {{D_{r,r}}\left( l \right)} \right] \approx \frac{{2l + 2\sqrt {nl} }}{{n\lambda }}.
\end{align}

When $l$ is large enough, the spreading is not finished after the relay initialization process is finished. ${{D_{r,r}}\left( l \right)}$ is thus constituted of two parts: the relay initialization complete time, as denoted by ${T_{complete}}$, and the extra packet collection time for the remaining $l-l_0$ packets, as denoted by ${T_{residual}}$. Thus,
\begin{align*}
{D_{r,r}}\left( l \right) = {T_{complete}} + {T_{residual}}.
\end{align*}

It is easy to argue that the relay initialization time ${T_{complete}}$ is the same as the single packet spreading time \emph{without} relaying, which is derived in \cite[Section III.A]{tech-report}, only that here the packets received by the nodes may be different.

According the protocol, the new-packet inter-arrival time ${T_m}$ for each node after reaching $\tilde{S}_n$ is independent and identically distributed exponential variable, i.e., ${T_m} \sim Exp\left( \frac{{n\lambda }}{2} \right),\forall m \in l_0, \cdots ,l' - 1$. Therefore $T_{residual}$ is Erlang distributed
\begin{align*}
T_{residual} \sim {\rm{Erlang}}\left( {l' - l_0 - 1,\frac{{n\lambda }}{2} } \right).
\end{align*}

The mean and variance of $T_{residual}$ are given by
\begin{align*}
E\left[ {{T_{residual}}} \right] = \frac{{2\left( {l' - {l_0} - 1} \right)}}{{n\lambda }}, ~~ \sigma_{T_{residual}}^2= \frac{4\left({l' -l_0 - 1}\right)}{{{n^2\lambda ^2}}},
\end{align*}
respectively. Thus when $l$ is large enough, the overall average delay is
\begin{align}
E\left[ {{D_{r,r}}\left( l \right)} \right] &= E\left[ {T_{complete}} + {T_{residual}} \right]  \nonumber \\
&= \frac{{\ln n}}{\lambda } + \frac{{2\left( {l' - {l_0} - 1} \right)}}{{n\lambda }} \approx \frac{{2l}}{{n\lambda }}.
\end{align}
\end{proof}

\subsubsection{The average network spreading time} we now move on to characterize the average source-to-network spreading time.

\emph{Definition:} the relay-assisted $l$-packet spreading time under the RMPR scheme, as denoted by ${{T_{r,r}}\left( l \right)}$, is the time that every node has reached the final state $S_{l'}$ in packet collection.
\begin{theorem}\label{S2NSpreading}
The average $l$-packet spreading time under the RMPR protocol, as denoted by $E\left[{{T_{r,r}}\left( l \right)}\right]$, is upper bounded by $\frac {\ln n}{\lambda}$ and lower bounded by $E\left[ {{D_{r,r}}\left( l \right)} \right]$ for small enough $l$. For large enough $n$ and $l>l_0$, where $l_0=\frac {n\ln n}{2}$, $E\left[{{T_{r,r}}\left( l \right)}\right]$ is approximately $\frac{{2l + 2\sqrt {2\left(l-l_0\right)\ln n} }}{{n\lambda }}$.
\end{theorem}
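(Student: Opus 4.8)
The plan is to treat the network spreading time $T_{r,r}(l)$ as the completion time of the \emph{slowest} node, and to split it into the same two phases used in Theorem~\ref{S2DDelay}: the relay initialization time $T_{complete}$ (common to all nodes) plus a residual packet-collection time, which for the whole network is the \emph{maximum} of the per-node residuals rather than a single one. Writing $T_{r,r}(l)=\max_i D_{r,r}^{(i)}(l)$ over the homogeneous nodes organizes both regimes of the statement.

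First I would dispose of the small-$l$ regime, where the bounds are structural. Since $\max_i D^{(i)}\ge D^{(j)}$ pointwise for every $j$, monotonicity of expectation gives $E[T_{r,r}(l)]=E[\max_i D^{(i)}]\ge \max_i E[D^{(i)}]=E[D_{r,r}(l)]$, which is the lower bound. For the upper bound I would invoke the relay initialization time, whose expectation follows the structure of \eqref{Sum-Sojourn}: $E[T_{complete}]=\frac1\lambda\sum_{k=1}^{n-1}\frac1{n-k}=\frac1\lambda(\ln n+\gamma+o(1))\approx\frac{\ln n}\lambda$, matching the single-packet no-relay spreading time of \cite[Section III.A]{tech-report}. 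For small enough $l$ (i.e. $l'\le l_0=\frac{n\ln n}2$, by the preceding Lemma) every node has, on average, already accumulated enough distinct packets by the end of relay initialization, so the slowest node also finishes by then and $E[T_{r,r}(l)]\lesssim\frac{\ln n}\lambda$.

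For the large-$l$ regime the real work begins. Here each node keeps collecting after $\tilde S_n$, so $T_{r,r}(l)=T_{complete}+\max_{i}T_{residual}^{(i)}$, where each $T_{residual}^{(i)}\sim\mathrm{Erlang}(l'-l_0-1,\tfrac{n\lambda}2)$ with mean $\mu=\frac{2(l'-l_0-1)}{n\lambda}$ and variance $\sigma^2=\frac{4(l'-l_0-1)}{n^2\lambda^2}$, exactly as in Theorem~\ref{S2DDelay}. The key step is to evaluate $E[\max_i T_{residual}^{(i)}]$: since $l-l_0$ is large, I would approximate each Erlang residual by a Gaussian $\mathcal N(\mu,\sigma^2)$ via the central limit theorem (an Erlang being a sum of i.i.d.\ exponentials), treat the $n$ residuals as approximately independent, and apply the classical extreme-value estimate $E[\max_i\mathcal N(\mu,\sigma^2)]\approx\mu+\sigma\sqrt{2\ln n}$. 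Substituting $\sigma\sqrt{2\ln n}=\frac{2\sqrt{2(l'-l_0-1)\ln n}}{n\lambda}\approx\frac{2\sqrt{2(l-l_0)\ln n}}{n\lambda}$, together with $E[T_{complete}]\approx\frac{\ln n}\lambda=\frac{2l_0}{n\lambda}$ and $\mu\approx\frac{2(l-l_0)}{n\lambda}$, the three terms collapse to the claimed $\frac{2l+2\sqrt{2(l-l_0)\ln n}}{n\lambda}$.

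The main obstacle I anticipate is justifying the maximum-of-residuals step rather than the algebra: the per-node residual times are not exactly Gaussian (the CLT approximation is sharpest in the bulk, yet it is precisely the upper tail that drives the maximum), and they are not strictly independent because neighbouring nodes share meetings with the same relay nodes. I would therefore rely on $n$ and $l-l_0$ both being large, so that the Gumbel-type concentration $\mu+\sigma\sqrt{2\ln n}$ holds to leading order, and argue that the independence approximation only perturbs lower-order terms. This is consistent with the word ``approximately'' in the statement and can be cross-checked against the simulations.
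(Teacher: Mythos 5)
Your proposal follows essentially the same route as the paper's own proof: the structural lower bound and the $\frac{\ln n}{\lambda}$ relay-initialization upper bound for small $l$, then for large $l$ the decomposition $T_{r,r}(l)=T_{complete}+\max_i T_{residual}^i$ with i.i.d.\ ${\rm Erlang}\left(l'-l_0-1,\frac{n\lambda}{2}\right)$ residuals, a Gaussian approximation, and the extreme-value estimate $M_n\sim\sqrt{2\ln n}$ (the paper's Lemma~\ref{MaxGaussian}), followed by the same cancellation $E\left[T_{complete}\right]\approx\frac{\ln n}{\lambda}=\frac{2l_0}{n\lambda}$. Your closing remarks on the non-Gaussian upper tail and the imperfect independence of the per-node residuals are caveats the paper silently elides, so your write-up is, if anything, slightly more careful on exactly the points where the paper's argument is heuristic.
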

\begin{proof}
When $l$ is small enough, each node can collect enough packets for recovering before the end of relay initialization, therefore the $l$-packet spreading time is upper bounded by $\frac {\ln n}{\lambda}$. Of course, it should also be larger than the average source-to-destination delay.

When $l$ is large enough, ${{T_{r,r}}\left( l \right)}$ is comprised of the relay initialization time and the maximum of $n-1$ i.i.d. Erlang distributed extra packet collection time $T_{residual}$.
\begin{align*}
{T_{r,r}}\left( l \right) = T_{complete} + \mathop {\max }\limits_{i \in 1, \cdots ,n-1} T_{residual}^i,
\end{align*}
where $T_{residual}^i$ is the Erlang distributed extra packet collection time for node $i$.

The Erlang distribution is a special case of the Gamma distribution, i.e., ${\rm{Erlang}}\left( {k,\lambda } \right) \Leftrightarrow \Gamma \left( {k,\frac{1}{\lambda }} \right)$. For large $k$, the Gamma distribution $\Gamma \left( {k,\frac{1}{\lambda }} \right)$ converges to a Gaussian distribution with mean $\mu  = \frac k \lambda$ and variance ${\sigma ^2} = \frac k {\lambda ^2}$. That is to say, for large $l'$
\begin{align}
T_{residual}^i &\sim {{\cal N}}\left( {\frac{2\left({l' - l_0 - 1}\right)}{n\lambda }, \frac{4\left({l' - l_0 - 1}\right)}{{{n^2\lambda ^2}}}} \right), \nonumber \\
&\forall i=\left\{1, \cdots , n-1\right\}.
\end{align}

By this approximation, the problem simplifies to estimating the maximum of $n-1$ i.i.d. Gaussian variables. Thus,
\begin{align}
&{T_{r,r}}\left( l \right) = T_{complete}  \nonumber \\
& + \mathop {\max }\limits_{i \in 1, \cdots ,n - 1} \left\{ {{\frac{{2\left( {l'  - l_0 - 1} \right)}}{{n\lambda }}} + \sqrt {{\frac{{4\left( {l' - l_0 - 1} \right)}}{{{n^2}{\lambda ^2}}}}} {X_i}} \right\},
\end{align}
where $X_i \sim {\cal N}\left(0,1\right), i \in 1, \cdots ,n-1$ are i.i.d unit Gaussian variables.

\begin{lemma}[\cite{extreme}]\label{MaxGaussian}
If ${X_i} \sim {\cal N}\left(0,1\right),i \in {1, \cdots ,n}$ are a series of i.i.d. Gaussian variables, and ${M_n} = \mathop {\max }\limits_{i = 1, \cdots ,n} \left\{X_i\right\}$ is the maximum of the $n$ Gaussian variables. Then, we have ${M_n} \sim \sqrt {2\ln n}$ with high probability when $n$ is large.
\end{lemma}

According to \emph{Lemma \ref{MaxGaussian}},
\begin{align}
&E\left[{T_{r,r}}\left( l \right)\right] = E\left[T_{complete}\right] \nonumber \\
&+ \frac{{2\left( {l'  - l_0 - 1} \right)}}{{n\lambda }} + \frac{{2\sqrt {l' - l_0 - 1} }}{{n\lambda }}\mathop {\max }\limits_{i = 1, \cdots ,n - 1} \left\{ {{X_i}} \right\} \nonumber \\
& \approx \frac{{\ln n}}{\lambda } + \frac{{2\left( {l' - l_0- 1} \right) + 2\sqrt {2\left( {l' - l_0- 1} \right)\ln \left( {n - 1} \right)} }}{{n\lambda }}.
\end{align}

Since $l'=(1+\epsilon)l \approx l$ and ${l_0} \approx \frac{{n\ln n}}{2}$, for large $l$, $n$ and $l > l_0$, $E\left[{T_{r,r}}\left( l \right)\right]$ is approximated by
\begin{align}
E\left[{T_{r,r}}\left( l \right)\right] \approx \frac{{2l + 2\sqrt {2\left(l-l_0\right)\ln n} }}{{n\lambda }}.
\end{align}
\end{proof}


\subsubsection{Discussions}
We now compare the results in \emph{Theorem \ref{S2DDelay}} and \emph{Theorem \ref{S2NSpreading}} with their non-rateless counterparts in \cite[Section IV.B]{tech-report}. For large-size message, the rateless-coding-assisted scheme significantly reduce both source-to-destination delay and source-to-network spreading time by at least a factor of $\ln(n)$. In essence, the proposed method exploits the strength of rateless codes from point-to-point transmissions and extends its application to point-to-network scenarios.


\section{Simulation Results}
\noindent The $n$ mobile nodes are deployed on a unit square and follow the HSMNet model described in Section \ref{ModelSection}. Without loss of generality, we only simulated the Random Direction mobility model. As for other HSMNets, the only difference relies in $\lambda$.

\begin{figure}[h] \centering
\includegraphics[width=0.45\textwidth]{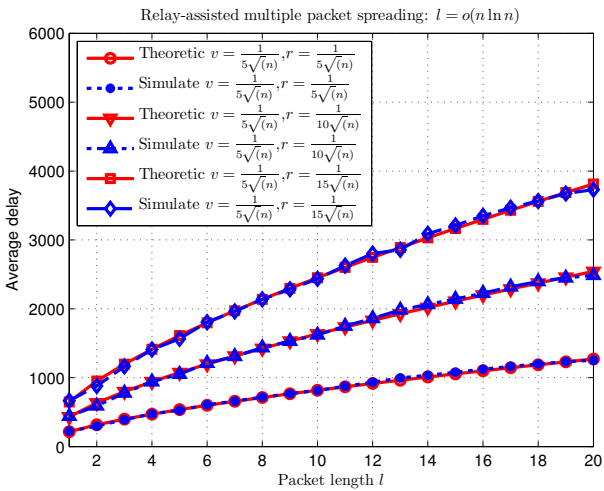}
\caption {Multiple packet delay with relaying when $l=o(n \ln n)$}
\label{MultiWithRelay_ShortPacket}
\end{figure}

\begin{figure}[h] \center
\includegraphics[width=0.5\textwidth]{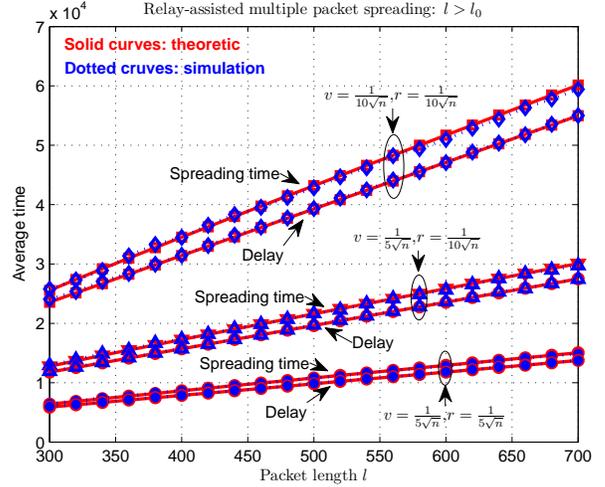}
\caption {Multiple packet delay with relaying when $l$ is large enough}
\label{MultiWithRelay_LongPacket}
\end{figure}

The multiple packet spreading with relaying case is shown in Fig.~\ref{MultiWithRelay_ShortPacket}-\ref{MultiWithRelay_LongPacket}. The theoretic results and the simulation results are plotted in red curves and blue curves, respectively. The former shows the multi-packet delay when $l=o(n \ln n)$, and the latter shows both the multi-packet delay and spreading time when $l$ is large. It is shown that all simulation results match the theoretic analysis perfectly. When $l$ grows large, the delay gradually becomes linear with $l$.

%
%
%

\section{Conclusions}
\noindent In this paper, we study \emph{multiple packet broadcasting employing rateless codes}. Our results include both \emph{point-to-point} delay and the \emph{point-to-network} spreading time. It is shown that, the \emph{Rateless-coding-assisted Multi-Packet Relaying (RMPR) scheme} can significantly reduce packet duplication, which not only makes multi-packet relaying possible but also greatly simplify the implementation. Finally, extensive simulations are conducted to support our theoretical analysis.

\bibliographystyle{unsrt}


\newpage

\end{document}